\newtheorem{theorem}{Theorem}[section]
\newtheorem{lemma}{Lemma}[section]
\newtheorem{definition}{Definition}[section]
\begin{document}
\title{A Mathematical Aspect of Hohenberg-Kohn Theorem\thanks{{This work was supported by the National Science
       Foundation of China under grants 9133202 and 11671389, and the Key Research Program of Frontier Sciences of the Chinese Academy of Sciences under grant QYZDJ-SSW-SYS010.}}}

\author{ Aihui
Zhou\thanks{LSEC, Institute of Computational Mathematics and Scientific/Engineering Computing,
Academy of Mathematics and Systems Science, Chinese Academy of Sciences,  Beijing 100190, China; and School of Mathematical Sciences,
University of Chinese Academy of Sciences, Beijing 100049, China
({azhou@lsec.cc.ac.cn}).}}

\date{}
\maketitle

\begin{abstract}
The Hohenberg-Kohn theorem plays a fundamental role in density functional theory,
which  has become a basic tool for the
study of electronic structure of matter.
In this article, we study the Hohenberg-Kohn theorem
for a class of external potentials based on a unique continuation principle.
\end{abstract}

{\bf Keywords:}\quad density functional theory,
electronic structure, unique continuation principle, Hohenberg-Kohn
theorem.\vskip 0.2cm

{\bf AMS subject classifications:}\quad 81V70.

\section{Introduction}\setcounter{equation}{0}

Density functional theory (DFT) is the most widely used many-body approach for electronic structure calculations and has significantly impacted on modern science and engineering (see, e.g., \cite{hohenberg-kohn64,martin04,noorden-maher-nuzz014,parr-yang89,redner05,naturematerials}),  of which Hohenberg-Kohn
theorem lies at the heart.

We see that the development of materials science, quantum chemistry, molecular biology,
and condensed-matter physics require to describe and understand the many-particle systems of thousands and hundreds of electrons
and  nuclei. We note that the basic mathematical model for electronic structure is  the Schr{\" o{dinger equation, which is set for the
many-particle wavefunction of the system in high dimensions. However, it is intractable to solve the Schr{\" o}dinger equation directly even using the modern advanced supercomputers except for few systems. Instead of using wavefunctions in high dimensions, DFT applies the particle density in three dimensions to model the system.  Derived from DFT, the so-called  Kohn-Sham model \cite{kohn-sham65}, which is equivalent to the Schr{\" o}dinger equation, is tractable. We understand that the theoretical basis of DFT is the Hohenberg-Kohn theorem.

Due to its subtleties, however, DFT is  not entirely elaborated yet
(c.f., e.g., \cite{kryachko05,kryachko06,kryachko-ludena14,levy10,lieb83,pino-bokanowski-etal07,szczepanik-dulak-wesolowski07,
zhou12,zhou15}
and references cited therein). For instance, the existing proofs of Hohenberg-Kohn theorem usually assume directly
or indirectly that
the particle density functional is differentiable or the particle density is positive  or the particle wavefunction does not vanish on a set of positive measure. We see that the particle density functional is not G$\hat{a}$teaux differentiable \cite{kvaal-helgaker15,lammert07,lieb83}, whether  the  particle density is positive is still open \cite{fournais-hoffmann-0stenhof08,zhou15}, and that the particle
wavefunction does not vanish on a set of positive measure  is
unclear in a real system (c.f. \cite{pino-bokanowski-etal07,zhou15}).
 We refer to \cite{ayers-golden-levy06,hadjisavvas-theophilou84,hohenberg-kohn64,kryachko05,kryachko-ludena14,levy79,levy10,lieb83,
pino-bokanowski-etal07,szczepanik-dulak-wesolowski07,zhou12} and references
cited therein for discussions on the Hohenberg-Kohn theorem.

In this paper, we prove the Hohenberg-Kohn theorem for  external
potentials that the associated wavefunctions do not vanish in an open set. We understand from Theorem XIII.57 of \cite{reed-simon78} together with its comment just below that the wavefunction of electronic Schr{\" o}dinger
equation does not vanishes on an open set, from which we obtain that Hohenberg-Kohn theorem holds true for Coulomb
potentials \cite{zhou12}, too. We note that the wavefunction  does not vanish in an open set is relatively mild
assumption \cite{reed-simon78,wolff93}.

\section{Ground state and density}\label{ground-state}\setcounter{equation}{0}
The
Hamiltonian of a system of interacting particles in an external potential $v$,
including any problem of electrons and  nuclei, is written as
\begin{eqnarray*}\label{hamit-ext}
{\cal H}&=&-\sum_{i=1}^N\frac{\hbar^2}{2m_e}\nabla^2_{x_i}
+\sum^N_{i=1}v(x_i)+\frac{1}{2}\sum_{i,j=1, i\ne
j}^N\frac{e^2}{|x_i-x_j|},
\end{eqnarray*}
where $\hbar$ is Planck's constant divided by $2\pi, m_e$ is the
mass of the electron, $\{x_i:i=1,\cdots,N\}$ are the variables that
describe the electron positions, and $e$ is the electronic charge.
Let $$T=-\sum_{i=1}^N\frac{\hbar^2}{2m_e}\nabla^2_{x_i}$$ be the
kinetic energy operator,$$ V_{ee}=\frac{1}{2}\sum_{i,j=1, i\ne
j}^N\frac{e^2}{|x_i-x_j|}$$ be the electron-electron repulsion
energy operator, and
\begin{eqnarray*}\label{density-def}
\rho(x)\equiv
\rho^\Psi(x)=N\sum_{\sigma_1,\sigma_2,\cdots,\sigma_N}\int_{\mathbb{R}^{3(N-1)}}
|\Psi((x,\sigma_1),(x_2,\sigma_2),\cdots,(x_N,\sigma_N))|^2\textup{d}x_2\cdots
\textup{d}x_N \end{eqnarray*} be the single-particle density. We have the energy of the system as follows
\begin{eqnarray*}\label{energy}
E=(\Psi, {\cal H}\Psi)=(\Psi, (T+V_{ee})\Psi)+\int_{\mathbb{R}^3}
v(x)\rho(x)\textup{d}x.\end{eqnarray*}
We understand that for an electronic Coulomb system, the external potential
\begin{eqnarray*}\label{coulomb-v}
v(x)=-\sum_{j=1}^M\frac{Z_je^2}{|x-r_j|}
\end{eqnarray*} is determined by $\{Z_j:j=1,2,\cdots,M\}$, which are the valence
charges of the nuclei, and $\{r_j: j=1,2,\cdots,M\}$, which are the
positions of the nuclei.

Let ${\cal H}_0=T+V_{ee}$ and  $v\in L^{3/2}(\mathbb{R}^{3})+L^{\infty}(\mathbb{R}^3)$. The total
Hamiltonian is ${\cal H}_v={\cal H}_0+{\cal V}$, where
$${\cal V}=\sum_{i=1}^Nv(x_i).$$
The associated ground state energy $E(v)$ is
\begin{eqnarray}\label{ground-state}
E(v)\equiv E(v,N)=\inf\{(\Psi,{\cal H}_v\Psi): \Psi\in
\mathscr{W}_N\},
\end{eqnarray} where
$$\mathscr{W}_N=\{\Psi\in H^1(\mathbb{R}^{3N}):
\sum_{\sigma_1,\sigma_2,\cdots,\sigma_N}\int_{\mathbb{R}^{3N}}|\Psi|^2\textup{d}x_1\cdots\textup{d}x_N=1\}.$$

Since there may or may not be a minimizer $\psi$ in
$\mathscr{W}_N$ and if there is one it may not be unique
\cite{lieb83}, we introduce a set  of
minimizers\begin{eqnarray*} \mathscr {G}_{v}&\equiv& \mathscr
{G}_{v,N}=\arg\inf\{(\Psi,{\cal H}_v\Psi): \Psi\in \mathscr{W}_N\},\\[0.2cm]
 \mathcal{V}_N&=&\{v\in L^{3/2}(\mathbb{R}^{3})+L^{\infty}(\mathbb{R}^3):\mathscr
    {G}_{v}\neq\varnothing\}.
\end{eqnarray*}
Any $\Psi$ in $\mathscr {G}_{v}$ is called a ground state of
(\ref{ground-state}). If $\Psi\in \mathscr {G}_{v}$, then
\begin{eqnarray}\label{ground-state-weak}
\mathcal {H}_v\Psi=E(v)\Psi
\end{eqnarray}
in the distributional sense. We see from the Schr{\" o}dinger equation (\ref{ground-state-weak}) that
the density $\rho$ of ground state  is determined by
the external potential $v$.

\section{Proof of Hohenberg-Kohn theorem}\label{h-k-theorem}\setcounter{equation}{0}
The following  property, whose proof can be found in \cite{zhou12} (c.f. also \cite{eschrig03,levy82,lieb83,pino-bokanowski-etal07}), is crucial and useful:

\begin{lemma}\label{pino-etal-lemma}
Given $v,v'\in L^{3/2}(\mathbb{R}^{3})+L^{\infty}(\mathbb{R}^{3})$.  Let $\rho_v=\rho^{\Psi_v}$ and
$\rho_{v'}=\rho^{\Psi_{v'}}$ with $\Psi_v\in \mathscr {G}_{v}$ and
$\Psi_{v'}\in \mathscr {G}_{v'}$. If $\rho_v=\rho_{v'}$, then
\begin{eqnarray*}\label{distribution-identity}
\left(\sum^N_{i=1}(v'-v)(x_i)-(E(v')-E(v))\right)\Psi_v=0 ~ ~\mbox{a.e.}~ (x_1,x_2,\cdots,x_N)\in \mathbb{R}^{3N}.
\end{eqnarray*}
\end{lemma}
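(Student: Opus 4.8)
The plan is to run the classical Hohenberg--Kohn comparison argument, but to push it until it yields a pointwise (almost everywhere) identity rather than the weaker conclusion that $v$ and $v'$ differ by a constant. The mechanism is that adding the two variational inequalities forces both of them to be equalities, which in turn forces $\Psi_v$ to be a ground state of $\mathcal{H}_{v'}$ as well; subtracting the two resulting Schr\"odinger equations then kills the common part $\mathcal{H}_0=T+V_{ee}$ and leaves exactly the asserted relation.

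First I would record the elementary identity $(\Psi,\mathcal{V}\Psi)=\int_{\mathbb{R}^3}v(x)\rho^\Psi(x)\,\mathrm{d}x$, valid for every $\Psi\in\mathscr{W}_N$ and every $v\in L^{3/2}(\mathbb{R}^3)+L^\infty(\mathbb{R}^3)$; it follows from the definition of $\rho^\Psi$ together with the permutation symmetry of $|\Psi|^2$ in the space-spin variables, the right-hand side being finite because such potentials are infinitesimally form-bounded relative to $-\Delta$ on $\mathbb{R}^3$ via the Sobolev embedding. Using $\Psi_{v'}$ as a trial function for $\mathcal{H}_v$ and writing $\mathcal{H}_v=\mathcal{H}_{v'}+(\mathcal{V}-\mathcal{V}')$, this identity gives $(\Psi_{v'},\mathcal{H}_v\Psi_{v'})=E(v')+\int_{\mathbb{R}^3}(v-v')\rho_{v'}\,\mathrm{d}x$, so the variational definition of $E(v)$ yields $E(v')+\int_{\mathbb{R}^3}(v-v')\rho_{v'}\,\mathrm{d}x\ge E(v)$. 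The symmetric choice of $\Psi_v$ as a trial function for $\mathcal{H}_{v'}$ gives $E(v)+\int_{\mathbb{R}^3}(v'-v)\rho_{v}\,\mathrm{d}x\ge E(v')$.

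At this point I would invoke the hypothesis $\rho_v=\rho_{v'}=:\rho$, so that the two integral terms are exact negatives of each other. Adding the two inequalities makes these terms cancel and produces $E(v)+E(v')\ge E(v)+E(v')$; hence both inequalities were in fact equalities. The equality in the second one reads $(\Psi_v,\mathcal{H}_{v'}\Psi_v)=E(v')$, so the normalized $\Psi_v$ attains the infimum defining $E(v')$, i.e. $\Psi_v\in\mathscr{G}_{v'}$, and therefore satisfies $\mathcal{H}_{v'}\Psi_v=E(v')\Psi_v$ in the distributional sense, exactly as $\Psi_v\in\mathscr{G}_v$ satisfies $\mathcal{H}_v\Psi_v=E(v)\Psi_v$ by \eqref{ground-state-weak}.

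The conclusion then follows by subtracting these two equations: since both carry the same $\mathcal{H}_0$, the kinetic and interaction parts cancel exactly, leaving $(\mathcal{V}'-\mathcal{V})\Psi_v=(E(v')-E(v))\Psi_v$, which is the claimed identity. I expect the only genuine subtlety to lie in this last passage, where the equality holds a priori only as a distribution; I would therefore remark that $\bigl(\sum_{i=1}^N(v'-v)(x_i)\bigr)\Psi_v$ is actually an element of $L^1_{\mathrm{loc}}(\mathbb{R}^{3N})$, again by the form-boundedness of the potential paired against the $H^1$ function $\Psi_v$, so that a distribution agreeing with a locally integrable function and vanishing must vanish almost everywhere. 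A secondary point worth stressing is that no nondegeneracy assumption enters anywhere: the argument is valid for arbitrarily chosen minimizers $\Psi_v,\Psi_{v'}$, because an admissible function whose energy equals the infimum is by definition a member of the (possibly large) minimizer set $\mathscr{G}_{v'}$.
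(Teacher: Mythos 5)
Your proposal is correct and follows essentially the same route as the paper's proof, which is not printed in the paper itself but deferred to \cite{zhou12} (cf.\ also \cite{pino-bokanowski-etal07,lieb83}): the two trial-function variational inequalities, their cancellation under $\rho_v=\rho_{v'}$ forcing equality, the promotion of $\Psi_v$ to a member of $\mathscr{G}_{v'}$ satisfying $\mathcal{H}_{v'}\Psi_v=E(v')\Psi_v$ in the distributional sense, and the subtraction of the two Schr\"odinger equations so that $\mathcal{H}_0$ cancels. Your two side remarks---that the identity $(\Psi,\mathcal{V}\Psi)=\int_{\mathbb{R}^3}v\rho^\Psi$ uses the permutation symmetry of $|\Psi|^2$, and that $\bigl(\sum_{i=1}^N(v'-v)(x_i)\bigr)\Psi_v\in L^1_{\mathrm{loc}}(\mathbb{R}^{3N})$ is what turns the distributional identity into an almost-everywhere one---are both sound and are exactly the points the cited proofs rely on implicitly.
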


In our analysis, we need the following helpful conclusion, which is a modified version of Lemma 1 of \cite{pino-bokanowski-etal07}
\begin{lemma}\label{multivarible-identity}
Let $(a,b)\subset \mathbb{R}$. If $w\in L^1((a,b)^{3})$ satisfies
\begin{eqnarray}\label{multivarible-identity1}
\sum_{j=1}^Nw(x_j)=C ~ ~\mbox{a.e.}~ (x_1,x_2,\cdots,x_N)\in (a,b)^{3N}
\end{eqnarray}
for some constant $C$, then $w(x)=C/N ~ ~\mbox{a.e.}~ x\in (a,b)^3$.
\end{lemma}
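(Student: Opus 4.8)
The plan is to exploit the product structure of \eqref{multivarible-identity1} via Fubini's theorem, collapsing the $N$-variable identity onto a single copy of the cube. Write $\Omega=(a,b)^3$, so that the hypothesis concerns a function $w\in L^1(\Omega)$ and the relation holds a.e. on $\Omega^N=(a,b)^{3N}$. First I would dispose of the case $N=1$, where \eqref{multivarible-identity1} reads $w(x_1)=C$ a.e. and the conclusion is immediate; so assume $N\ge 2$. Let $E\subset\Omega^N$ be the null set on which \eqref{multivarible-identity1} fails, and let $Z\subset\Omega$ be the null set on which $w$ is infinite or undefined (null because $w\in L^1(\Omega)$).

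The key step is to freeze the last $N-1$ coordinates at a generic value. By Fubini's theorem applied to the indicator of $E$, for almost every $(x_2,\dots,x_N)\in\Omega^{N-1}$ the slice $\{x_1:(x_1,x_2,\dots,x_N)\in E\}$ is null; moreover, since each set $\{x_j\in Z\}$ is null in $\Omega^{N-1}$, almost every such tuple also satisfies $x_j\notin Z$ for every $j=2,\dots,N$. Fix one tuple $(x_2^0,\dots,x_N^0)$ enjoying both properties. Then for almost every $x_1\in\Omega$,
\begin{eqnarray*}
w(x_1)=C-\sum_{j=2}^{N}w(x_j^0),
\end{eqnarray*}
and the right-hand side is a finite constant $K$ independent of $x_1$; hence $w=K$ almost everywhere on $\Omega$. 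To identify $K$, substitute $w\equiv K$ back into \eqref{multivarible-identity1} to get $NK=C$, so $K=C/N$ (equivalently, integrating \eqref{multivarible-identity1} over $\Omega^N$ and using $\int_{\Omega^N}w(x_j)=|\Omega|^{N-1}\int_\Omega w$ yields the same value). This gives $w(x)=C/N$ for almost every $x\in\Omega$, as claimed.

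The argument carries no essential difficulty beyond careful bookkeeping of the exceptional null sets, and that is precisely the point demanding attention: the frozen slice $(x_2^0,\dots,x_N^0)$ must \emph{simultaneously} avoid the Fubini-exceptional set (so the identity survives after fixing the last $N-1$ coordinates) and avoid $Z$ in each coordinate (so that $K$ is a genuine finite real number rather than a formal $\infty-\infty$). Once the slice is chosen correctly, constancy of $w$ is automatic and the remaining determination of the constant is routine. I note that finiteness of $|\Omega|$ is not used in the slicing step itself, so the core argument is robust even though $(a,b)$ is bounded here.
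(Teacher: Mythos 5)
Your proof is correct, but it takes a genuinely different route from the paper. The paper mollifies: it extends $w$ by zero to $\tilde w$ on $\mathbb{R}^3$, convolves with a smooth kernel $J_\epsilon$ so that the a.e.\ identity becomes an \emph{everywhere} identity for the smooth function $J_\epsilon*\tilde w$, then sets all coordinates equal, $x_j=x$, to read off $(J_\epsilon*\tilde w)(x)=C/N$ pointwise, and finally lets $\epsilon\to 0$ using a.e.\ convergence of mollifications. The smoothing step is what legitimizes the diagonal substitution $x_1=\cdots=x_N=x$, which would be meaningless in the raw a.e.\ statement since the diagonal is a null set in $(a,b)^{3N}$. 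You sidestep the diagonal issue entirely by Fubini slicing: freezing $(x_2,\dots,x_N)$ at a generic tuple that avoids both the Fubini-exceptional set and the set where $w$ is infinite, which immediately forces $w$ to be a.e.\ equal to a finite constant $K$, after which $K=C/N$ follows by substituting back (or by integrating, using $|\Omega|<\infty$). Your argument is more elementary and self-contained --- it needs no mollifier theory or a.e.\ convergence theorem, only measurability and a.e.\ finiteness of $w$ (the $L^1$ hypothesis is used just to guarantee the latter) --- and, as you note, the slicing step does not even use boundedness of $(a,b)$, provided the constant is then identified by re-substitution rather than integration. What the paper's approach buys in exchange is alignment with the argument of the cited reference (Lemma 1 of Pino et al.) and a technique, regularization of an a.e.\ functional identity, that transfers to settings where one genuinely wants pointwise evaluation of the unknown function; but as a proof of this particular lemma, your version is leaner and closes all the measure-theoretic bookkeeping explicitly.
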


\begin{proof}
 We use the similar argument as that in \cite{pino-bokanowski-etal07}. Let $J_\epsilon(x)=\epsilon^{-3}J(x/\epsilon)$, where
\begin{eqnarray*}
J(x)=\left\{\begin{array}{rcl}
 \mathcal{J}\exp\left(-1/(1-|x|^2)\right)& & \mbox{if}~~|x|<1,\\[0.2cm]
0 & & \mbox{if}~~|x|\ge 1,
\end{array}\right.
\end{eqnarray*}
with
$$
\mathcal{J}=1/\int_{\mathbb{R}^3}\exp\left(-1/(1-|x|^2)\right).
$$
We see that $J_\epsilon(x)$ is nonnegative, belongs to $C_0^{\infty}(\mathbb{R}^3)$, satisfies
$$
\int_{\mathbb{R}^3}J_\epsilon(x)\textup{d}x=1,
$$ and $J_\epsilon *{\tilde w}\in C^{\infty}(\mathbb{R}^3)$ (see e.g. page 36 of \cite{adams-fournier03}), where
$$
(J_\epsilon *{\tilde w})(x)=\int_{\mathbb{R}^3}J_\epsilon(x-y){\tilde w}(y)\textup{d}y
$$
and
\begin{eqnarray*}
{\tilde w}(x)=\left\{\begin{array}{rcl}
 w& & \mbox{if}~~x\in (a,b)^3,\\[0.2cm]
0 & & \mbox{if}~~x\in \mathbb{R}^3\setminus (a,b)^3.
\end{array}\right.
\end{eqnarray*}
Therefore, we obtain from (\ref{multivarible-identity1}) that
\begin{eqnarray}\label{multivarible-identity2}
\sum_{j=1}^N(J_\epsilon*{\tilde w})(x_j)=C ~ ~\forall~ (x_1,x_2,\cdots,x_N)\in (a,b)^{3N}.
\end{eqnarray}
Setting $x_j=x(j=1,2,\cdots,N)$ in (\ref{multivarible-identity2}), we get
\begin{eqnarray*}\label{multivarible-identity2}
(J_\epsilon *{\tilde w})(x)=C/N ~ ~\forall~ x\in (a,b)^3.
\end{eqnarray*}
Using the fact that $(J_\epsilon *{\tilde w})(x)$ converges to ${\tilde w}(x) ~a.e.$ (see e.g. Theorem 2.29 of \cite{adams-fournier03}) and ${\tilde w}=w$ in $(a,b)^3$, we then arrive at the conclusion and complete the proof.
\end{proof}

To restate and prove the Hohenberg-Kohn theorem, we recall the  unique continuation principle, which is significant in the context of partial differential equations
(see, e.g., \cite{jerison-kenig85,reed-simon78,wolff93} and references cited therein).

\begin{definition}
Equation (\ref{ground-state-weak}) has the unique continuation property if every solution that vanishes on an open
set of $\mathbb{R}^{3N}$ vanishes identically.
\end{definition}


Let $\mathbb{V}$ be the set of all the external potentials $v\in L^{3/2}(\mathbb{R}^{3})+L^{\infty}(\mathbb{R}^3)$
such that the  unique continuation principle holds true for  (\ref{ground-state-weak}).\footnote{We conjecture that (c.f.  Remarks (ii) of \cite{lieb83})
 $$
 \mathbb{V}= L^{3/2}(\mathbb{R}^{3})+L^{\infty}(\mathbb{R}^3).$$}
We observe that $\mathbb{V}_C\subset \mathbb{V}$ (see, e.g., \cite{jerison-kenig85} or Theorem XIII.57  the comment just below, page 226 of \cite{reed-simon78}), where
 \begin{eqnarray*}
\mathbb{V}_C=\left\{-\sum_{j=1}^M\frac{Z_je^2}{|x-r_j|}: Z_j\in
\mathbb{R}, r_j\in \mathbb{R}^3 (j=1,2,\cdots, M);
M=1,2,\cdots\right\}.\end{eqnarray*}
 More precisely, if $v\in \mathbb{V}_C$ and $\Psi\in \mathscr {G}_{v}$ satisfies (\ref{ground-state-weak}) and  vanishes
 on an open set of $\mathbb{R}^{3N}$, then $\Psi\equiv 0$ on $\mathbb{R}^{3N}$.

 Now we are able to prove  that the density  $\rho$ of ground state is uniquely determined by the external potential $v$, apart from an additive constant in some open set.

\begin{theorem}\label{hohenberg-kohn}
Let $\Psi_v\in \mathscr {G}_{v}$ and $\Psi_{v'}\in \mathscr
{G}_{v'}$ with $v,v'\in \mathbb{V}$. If $v$ and $v'$ differ by more than just a constant (in the sense of almost everywhere) in some open set, then
$\rho^{\Psi_v}\not=\rho^{\Psi_{v'}}$.
\end{theorem}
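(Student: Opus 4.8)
The plan is to argue by contraposition: I assume that the two ground-state densities coincide, $\rho^{\Psi_v}=\rho^{\Psi_{v'}}$, and derive that $v'-v$ must agree with a constant almost everywhere, which contradicts the hypothesis. Writing $w=v'-v$ and $C=E(v')-E(v)$, the starting point is Lemma~\ref{pino-etal-lemma}, which under the assumption $\rho^{\Psi_v}=\rho^{\Psi_{v'}}$ immediately yields the pointwise relation
\begin{equation*}
\Big(\sum_{i=1}^N w(x_i)-C\Big)\Psi_v(x_1,\dots,x_N)=0\qquad\text{a.e. on }\mathbb{R}^{3N}.
\end{equation*}
Thus, off the zero set of $\Psi_v$, the additive identity $\sum_{i=1}^N w(x_i)=C$ holds.

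The second step is to remove the factor $\Psi_v$ by means of the unique continuation property. Since $\Psi_v\in\mathscr{W}_N$ is normalized, it is not the zero function, and $v\in\mathbb{V}$ guarantees that $\Psi_v$ cannot vanish on any open subset of $\mathbb{R}^{3N}$; moreover, by elliptic regularity for (\ref{ground-state-weak}) with $v\in L^{3/2}(\mathbb{R}^3)+L^{\infty}(\mathbb{R}^3)$, $\Psi_v$ may be taken continuous, so that $\{\Psi_v\neq 0\}$ is open and dense. The crucial point I would then establish is that the additive identity $\sum_{i=1}^N w(x_i)=C$ holds almost everywhere on $\mathbb{R}^{3N}$. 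Once this is in hand, Lemma~\ref{multivarible-identity} applies on every cube $(a,b)^{3}$ (note $w\in L^1((a,b)^3)$ since $L^{3/2}(\mathbb{R}^3)+L^{\infty}(\mathbb{R}^3)\subset L^1_{\mathrm{loc}}$): it forces $w=C/N$ almost everywhere on $(a,b)^3$, and since the value $C/N$ is independent of the cube, letting the cubes exhaust $\mathbb{R}^3$ gives $w=C/N$ a.e. on all of $\mathbb{R}^3$. In particular $v'-v$ equals a constant almost everywhere on the given open set, the desired contradiction.

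The hard part is precisely the passage from the unique continuation property to the global almost-everywhere identity $\sum_{i=1}^N w(x_i)=C$. The property as defined here is the open-set version, which by itself only yields that $\Psi_v$ is nonzero on a dense open set; it does \emph{not} control the measure of the zero set $\{\Psi_v=0\}$, and a priori a solution could vanish on a nowhere-dense set of positive measure, on which the factor identity provides no information whatsoever. To close this gap I would invoke the stronger form of unique continuation supplied by the cited works of Jerison--Kenig and Wolff, namely that a nonzero solution cannot vanish on a set of positive measure, so that $\{\Psi_v=0\}$ is null and the additive identity indeed holds a.e. A secondary technical point is that continuity alone produces a product of (generally distinct) cubes $\prod_i Q_i\subset\{\Psi_v\neq0\}$ rather than a diagonal cube $(a,b)^{3N}$; applying Lemma~\ref{multivarible-identity} in its stated form then requires either a slicing (Fubini) argument showing $w$ is a.e. constant on each $Q_i$, or first establishing that the zero set is null so that the identity is available on genuine cubes $(a,b)^{3N}$.
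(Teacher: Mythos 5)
Your proposal follows the same route as the paper's own proof --- Lemma \ref{pino-etal-lemma} to obtain the factored identity, unique continuation to remove the factor $\Psi_v$, and the mollification Lemma \ref{multivarible-identity} to force $v'-v$ to be a.e.\ constant --- and the differences in bookkeeping (you establish the additive identity on all of $\mathbb{R}^{3N}$ and exhaust $\mathbb{R}^3$ by cubes, while the paper works once on the given cube $(a,b)^{3N}$) are immaterial. The substantive difference is at the unique-continuation step, and there your caution is warranted: the paper deduces $\sum_{i=1}^N(v'-v)(x_i)=E(v')-E(v)$ a.e.\ on $(a,b)^{3N}$ from Lemma \ref{pino-etal-lemma} plus the open-set unique continuation property that defines $\mathbb{V}$, which is precisely the inference you flag as insufficient: the open-set property does not prevent $\{\Psi_v=0\}$ from having positive Lebesgue measure, and on that set the factored identity carries no information. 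Your repair --- unique continuation from sets of positive measure, so that $\{\Psi_v=0\}$ is null and the additive identity genuinely holds a.e.\ --- is what a complete argument requires; note, however, that this stronger property is not a consequence of $v\in\mathbb{V}$ as the paper defines $\mathbb{V}$, so you are in effect proving the theorem under a strengthened hypothesis (satisfied, for instance, by Coulomb-type potentials, for which measure-theoretic unique continuation is known). One small inaccuracy: your claim that elliptic regularity makes $\Psi_v$ continuous is not justified for general $v\in L^{3/2}(\mathbb{R}^{3})+L^{\infty}(\mathbb{R}^{3})$, since the many-body potential is only in $L^{3/2}_{\mathrm{loc}}(\mathbb{R}^{3N})$, far below the exponent $p>3N/2$ that such regularity demands; but this side remark, like the product-of-distinct-cubes issue it forces you to discuss, becomes moot once the zero set is known to be null.
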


\begin{proof}
Let   $v$ and $v'$ differ by more than just a constant in $(a,b)^3$, where $a<b$.
If  $\rho^{\Psi_v}=\rho^{\Psi_{v'}}$, we see  from Lemma \ref{pino-etal-lemma} and the unique continuation principle  that
$$\sum^N_{i=1}(v'-v)(x_i)=E(v')-E(v)~ ~\mbox{a.e.}~ (x_1,x_2,\cdots,x_N)\in (a,b)^{3N},$$
which together with Lemma \ref{multivarible-identity} leads to
\begin{eqnarray*}\label{onevarible-identity}
(v'-v)(x)=\left(E(v')-E(v)\right)/N  ~ ~\mbox{a.e.}~ x\in (a,b)^3.
\end{eqnarray*}
This  is a contradiction to that  $v$ and $v'$ differ by more than just a constant in $(a,b)^3$. We complete the proof.
\end{proof}

%
%
%
%
%
%
%

\end{document}